\newcommand{\tconv}{\text{tconv\,}}
\newtheorem{thm}{Theorem}
\newtheorem{definition}[thm]{Definition}
\newtheorem{remark}[thm]{Remark}
\newcommand{\RR}{\mathbb{R}}
\newcommand{\R}{\mathbb{R}}
\title{Imputing phylogenetic trees using tropical polytopes over the space of phylogenetic trees}
\author{Ruriko Yoshida}
\date{}
\begin{document}

\maketitle

\begin{abstract}
    When we apply comparative phylogenetic analyses to genome data, it is a well-known problem and challenge that some of given species (or taxa) often have missing genes.  In such a case, we have to impute a missing part of a gene tree from a sample of gene trees.   In this short paper we propose a novel method to infer a missing part of a phylogenetic tree using an analogue of a classical linear regression in the setting of tropical geometry.  In our approach, we consider a tropical polytope, a convex hull with respect to the tropical metric closest to the data points.  We show a condition that we can guarantee that an estimated tree from our method has at most four Robinson–Foulds (RF) distance from the ground truth and computational experiments with simulated data show our method works well.  
\end{abstract}
\section{Introduction}

Due to a new technology, today we are able to generate sequences from genome with lower cost.  However, at the same time, we have a great challenge to analyze large scale datasets from genome sequences.  In phylogenomics, a new field which applies tools from phylogenetics to genome datasets, we often conduct comparative phylogenetic analyses, that is, to compare evolutionary histories among a set of taxa between different genes from genome (for example, see \cite{Koonin}). However, we often face the problem in this process that some taxa in the dataset have missing gene(s) \cite{8554124}.  When it happens, systematists infer missing part of a gene tree from other gene trees using supervised learning method, such as linear regression model.  

A phylogenetic tree is a weighted tree which represents evolutionary history of a given set of taxa (or species).  In a phylogenetics, leaves represent species or taxa in the present time which we can observed, and internal nodes in the tree, which represent ancestors of the species, do not have any labels.  A gene tree is a phylogenetic tree reconstructed from an alignment of a gene in a genome.  Gene trees with the same set of species or taxa do not have to have the same tree topology since each gene might have different mutation rates due to the selection pressures, etc \cite{coalescent}.  In a comparative phylogenetic analysis, we often compare gene trees (for example, we compare how they are different, how their mutation rates are different, and often we are interested in inferring the species tree).  

To infer a missing part of a gene tree, we often apply a supervised method to regress the missing part.  In this process, first, we compute an unique vector representation of each gene tree.  Then we infer the missing components of the vector of the tree from the vectors computed from other gene trees in a dataset using a regression model, such as a multiple linear regression \cite{8554124}.

However, a set of all such vectors realizing all possible phylogenetic trees, which is called a {\em space of phylogenetic trees}, is not Euclidean.  In fact, a space of phylogenetic trees is an union of polyhedral cones with a large co-dimension, so this is not even convex in terms of Euclidean metrics.  Therefore, it is not appropriate to apply classical regression models, such as linear regression or Neural Networks, since they assume convexity in terms of Euclidean geometry.  Thus, in this short paper, we propose an analogue of a classical multiple linear regression in the setting of tropical geometry with the max-plus algebra: an application of tropical polytopes to infer the missing part of a phylogenetic tree.  

{\em Equidistant trees} are used to model gene trees under the multi-species coalescent model \cite{coalescent}. Therefore, in this paper, we focus on an equidistant tree, which is a rooted phylogenetic tree such that the total weight on an unique path from its root to each leaf is the same, and we focus on the space of all possible equidistant trees.  It is well-known that the space of all possible equidistant trees is a {\em tropical Grassmannian}, which means that it is a tropically linear space with respect to the {\em tropical metric} \cite{AK,SS,10.1093/bioinformatics/btaa564}.  Therefore, with the tropical metric with the max-plus algebra, we can conduct statistical analyses using tropical linear algebra, analogue of a classical linear algebra.  In fact, there has been much development in statistical learning over the space of phylogenetic trees using tools from tropical geometry \cite{10.1093/bioinformatics/btaa564,YZZ,anthea,ArdilaKlivans,LSTY,Yoshida,Yoshida2}.   

Since a tropical polytope is tropically convex and since the space of equidistant trees is tropically convex, if all vertices are equidistant trees, then a tropical polytope is contained in  the space of equidistant trees.  Thus, in this paper, we propose to use a tropical polytope over the space of equidistant trees to infer missing part of a phylogenetic trees. Our proposed method has basically four main step: (1) compute induced trees on the set of leaves which we observe in $T$, a tree with missing leaf (leaves) from a training set; (2) compare $T$ with these induced trees; (3) compute a tropical polytope with trees with full set of leaves whose induced trees have closest tree topologies with $T$; and (4) project $T$ onto the tropical polytope computed in Step (3).

In Section \ref{sec:tropical:basic} we discuss basics from tropical geometry and in Section \ref{sec:phylo:basic}, we discuss basics from phylogenetics. In Section \ref{sec:method}, we show our novel method to impute a missing part of a phylogenetic tree.  Then, in Section \ref{sec:theory}, we show a theoretical condition of $T$ that the worst case scenario for the estimated tree via our method has the {\em Robinson-Foulds distance} at most 4.  Then Section \ref{sec:comp} shows computational experiments of our method against other methods including a multiple linear regression and our method performs well.  

\section{Basics in Tropical Geometry}\label{sec:tropical:basic}

In this section, we discuss basics from tropical geometry.  We consider the {\em tropical projective torus},  $\mathbb R^e \!/\mathbb R {\bf 1}$ where ${\bf 1}:=(1, 1,\ldots , 1)$ is the vector with all ones in $\mathbb{R}^e$.  Basically this means that any vectors in $\mathbb R^e \!/\mathbb R {\bf 1}$ is invariant with ${\bf 1}$, i.e., $(v_1+c, \ldots , v_e+c) = (v_1, \ldots , v_e) = v$ for any element $v:=(v_1, \ldots , v_e) \in \mathbb R^e \!/\mathbb R {\bf 1}$.  
For more details, see \cite{ETC} and \cite{MS}.

Under the tropical semiring $(\,\mathbb{R} \cup \{-\infty\},\oplus,\odot)\,$, the tropical arithmetic operations of addition and multiplication are defined as:
$$a \oplus b := \max\{a, b\}, ~~~~ a \odot b:= a + b ~~~~\mbox{  where } a, b \in \mathbb{R}\cup\{-\infty\}.$$
For any scalars $a,b \in \mathbb{R}\cup \{-\infty\}$ and for any vectors $x = (x_1, \ldots ,x_e),\; y= (y_1, \ldots , y_e) \in \mathbb R^e \!/\mathbb R {\bf 1}$, we have tropical scalar multiplication and tropical vector addition defined as:
$$a \odot x \oplus b \odot y := (\max\{a+x_1,b+y_1\}, \ldots, \max\{a+x_e,b+y_e\}).$$

\begin{definition}\label{def:polytope}
Suppose we have a set $S \subset \mathbb R^e \!/\mathbb R {\bf 1}$. If 
\[
a \odot x \oplus b \odot y \in S
\]
for any $a, b \in \R$ and for any $x, y \in S$, then $S$ is called {\em tropically convex}.
Suppose we have a finite subset $V = \{v^1, \ldots , v^s\}\subset \mathbb R^e \!/\mathbb R {\bf   1}$.  Then, the smallest tropically-convex subset containing $V$ is called the {\em tropical convex hull} or {\em tropical polytope} of $V$. $\mathrm{tconv}(V)$ can also be written as:
$$ \mathrm{tconv}(V) = \{a_1 \odot v^1 \oplus a_2 \odot v^2 \oplus \cdots \oplus a_s \odot v^s \mid  a_1,\ldots,a_s \in \R \}.$$
A {\em tropical line segment}, $\Gamma_{v^1, v^2}$, between two points $v^1, \, v^2$ is a tropical convex hull of $\{v^1, v^2\}$. 
\end{definition}

\begin{remark}
    By the definition, if a set $S \subset \mathbb R^e \!/\mathbb R {\bf 1}$ is tropically convex, then a tropical line segment between any two points in $S$ must be contained in $S$.
\end{remark}

\begin{definition}
\label{eq:tropmetric} 
For any points $v:=(v_1, \ldots , v_e), \, w := (w_1, \ldots , w_e) \in \mathbb R^e \!/\mathbb R {\bf 1}$,  the {\em tropical metric}), $d_{\rm tr}$, between $v$ and $w$ is defined as:
\begin{equation*}
d_{\rm tr}(v,w)  := \max_{i \in \{1, \ldots , e\}} \bigl\{ v_i - w_i \bigr\} - \min_{i \in \{1, \ldots , e\}} \bigl\{ v_i - w_i \bigr\}.
\end{equation*}
\end{definition}

\begin{definition}\label{def:proj}
Let $V:= \{v^1, \ldots, v^s\} \subset \mathbb{R}^e/{\mathbb R} {\bf 1}$ and let $P = \tconv(v^1, \ldots, v^s)\subseteq \mathbb R^{e}/\RR{\bf 1}$ be a tropical polytope with its vertex set $V$.
For $x \in \mathbb R^{e}/\RR{\bf 1}$, let
\begin{equation}\label{eq:tropproj} 
\pi_{P} (x) \!:=\! \bigoplus\limits_{l=1}^s \lambda_l \odot  v^l, ~ ~ {\rm where} ~ ~ \lambda_l \!=\! {\rm min}\{x-v^l\}.
\end{equation}
Then $\pi_{P} (x)$ is a projection onto $P$ with the property such that
\[
d_{\rm tr}(x, \pi_{P} (x))  \leq d_{\rm tr}(x, y)
\]
for all $y \in P$.  
\end{definition}

\section{Basics in Phylogenetic Trees}\label{sec:phylo:basic}

Let $[m]:= \{1, \ldots , m\}$.  A phylogenetic tree $T$ on $[m]$ is a weighted tree of $m$ leaves with the labels $[m]$ and internal nodes in the tree do not have labels.  A subtree $T'$ in $T$ on $a \subset [m]$ is a subtree of $T$ with leaves $a$.  An {\em equidistant tree} on $[m]$ is a rooted phylogenetic tree on $[m]$ such that the total weight on the path from its root to each leaf $i$ in $[m]$ has the same distance for each $i \in [m]$.  In this paper, we assume on equidistant trees.

In order to conduct a statistical analysis, we have to convert a phylogenetic tree into a vector.  Now we discuss one way to convert a phylogenetic tree into a vector.  
\begin{definition}
    Suppose we have a {\em dissimilarity map} $D: [m] \times [m] \to \mathbb{R}$ such that 
    \[
    D(i, j) = \begin{cases}
        D(i, j) \geq 0 &\mbox{if } i \not = j\\
        0 &\mbox{otherwise.}
    \end{cases}
    \]
    If there exists a phylogenetic tree on $[m]$ such that $D(i, j)$ is the total weight on the unique path from a leaf $i \in [m]$ to a leaf $j \in [m]$, then we call $D$ as a {\em tree metric}.  
\end{definition}

\begin{remark}
    Since a tree metric of a phylogenetic tree on $[m]$ is symmetric and its diagonal is 0, we consider an upper triangular matrix of the tree metric and we consider the upper triangular matrix of the tree metric as a vector in $e = \binom{m}{2}$.  
\end{remark}

\begin{definition}
    Let $D: [m] \times [m] \to \RR$ be a metric over $[m]$, namely, $D$ is a map from $[m]\times [m]$ to $\RR$ such that
\begin{eqnarray}\nonumber
D(i, j) = D(j, i) & \mbox{for all } i, j \in [m]\\\nonumber
D(i, j) = 0 & \mbox{if and only if } i = j\\\nonumber
D(i, j) \leq D(i, k) + D(j, k) & \mbox{for all }i, j, k \in [m].
\end{eqnarray}

Suppose $D$ is a metric on $[m]$.  Then if $D$ satisfies 
\begin{eqnarray}
\max\{D(i, j), D(i, k), D(j, k)\} 
\end{eqnarray}
is attained at least twice for any $i,j,k \in [m]$, then $D$ is called an {\em ultrametric}.  
\end{definition}

It is well-known that if we have an ultrametric on $[m]$, then there is an unique equidistant tree on $[m]$ by the following theorem:
\begin{thm}[\cite{Buneman}]\label{thm:3pt}
Suppose we have an equidistant tree $T$ with a leaf label set $[m]$ and suppose $D(i, j)$ for all $i, j \in [m]$ is the distance from a leaf $i$ to a leaf $j$.  Then, $D$ is an ultrametric if and only if $T$ is an equidistant tree on $[m]$. 
\end{thm}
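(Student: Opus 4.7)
The plan is to prove the standard biconditional tying ultrametricity to existence of an equidistant tree, so I will split the argument into the two implications and rely on the MRCA (most recent common ancestor) structure in a rooted tree throughout.

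For the forward direction, I would assume $T$ is equidistant with root $\rho$, so every leaf lies at a common depth $h$ from $\rho$. Given any three leaves $i,j,k$, I would look at their pairwise MRCAs in $T$. A rooted binary (or multifurcating) subtree on three leaves has only two combinatorial shapes: either two of the leaves, say $i$ and $j$, share an MRCA strictly below the MRCA of $\{i,j,k\}$, or all three meet at the same node (a polytomy). In the first case I would compute $D(i,k)=D(j,k)=2(h-\mathrm{depth}(\mathrm{MRCA}(i,j,k)))$ using equidistance, and $D(i,j)=2(h-\mathrm{depth}(\mathrm{MRCA}(i,j)))$, yielding $D(i,j)\le D(i,k)=D(j,k)$; in the polytomy case all three distances coincide. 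Either way the maximum among $D(i,j),D(i,k),D(j,k)$ is attained at least twice, so $D$ is an ultrametric.

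For the reverse direction I would argue by induction on $m$, giving an explicit agglomerative construction that doubles as a uniqueness proof. Base case $m=2$ is immediate: a single cherry of height $D(1,2)/2$. For the inductive step, let $\delta=\min_{i\ne j}D(i,j)$ and let $C=\{i\in[m]: D(i,j)=\delta \text{ for some }j\}$ of closest pairs. Using the three-point condition, I would show that if $D(i,j)=\delta$ and $D(j,k)=\delta$ then $D(i,k)=\delta$ as well, so the relation ``distance $\delta$'' partitions a subset of leaves into equivalence classes. I would collapse each such class to a single representative, define a reduced dissimilarity $\tilde D$ on the quotient leaf set, verify (again using the three-point condition) that $\tilde D$ is still an ultrametric on $[m']$ with $m'<m$, apply the inductive hypothesis to obtain an equidistant tree $\tilde T$, and then attach a cherry/star of height $\delta/2$ at each collapsed representative to recover $T$. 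Checking that $T$ realizes $D$ reduces to verifying distances between a collapsed leaf and an uncollapsed one, which the ultrametric condition pins down uniquely.

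The main obstacle will be the clean execution of the inductive step — specifically showing that ``distance $\delta$'' really is an equivalence relation on the closest leaves and that the reduced metric $\tilde D$ on representatives inherits the three-point condition. Both facts follow from the ultrametric three-point inequality applied to carefully chosen triples, but one has to be careful that merging a cluster does not create an inconsistency, i.e., that any two leaves $i,i'$ in the same cluster satisfy $D(i,k)=D(i',k)$ for every $k$ outside the cluster; this is exactly what lets the quotient distance $\tilde D$ be well defined. Uniqueness of the realizing tree then comes for free: the cluster structure at height $\delta/2$ is forced by $D$, and the induction supplies uniqueness at all higher levels.
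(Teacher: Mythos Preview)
Your argument is correct and is essentially the classical proof of this result. However, note that the paper does not actually supply a proof of this theorem: it is stated with a citation to Buneman and used as a black box, so there is no ``paper's own proof'' to compare against. What you have written is the standard textbook argument (the forward direction via MRCA depths, the reverse direction via agglomerative clustering and induction on $m$), and it goes through without difficulty. The only places worth tightening are cosmetic: your set $C$ is slightly awkwardly defined (it is cleaner to put the equivalence relation $i\sim j \iff D(i,j)\le\delta$ on all of $[m]$, so that singletons are their own classes), and you should state explicitly that a leaf $k$ outside a cluster satisfies $D(i,k)>\delta$ for every $i$ in that cluster, which is what makes the ``two leaves in the same cluster are equidistant from any outside leaf'' step work. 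Neither of these is a genuine gap.

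One further remark: the theorem as literally stated in the paper is circular (it hypothesizes that $T$ is equidistant and then concludes $T$ is equidistant). You have, correctly, proved the intended statement --- that a dissimilarity $D$ on $[m]$ is an ultrametric if and only if it is realized as the leaf-to-leaf distance in some equidistant tree --- which is what the paper actually uses downstream.
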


Therefore by Theorem \ref{thm:3pt}, in this paper, we consider the set of ultrametrics, $\mathcal{U}_m \subset \mathbb{R}^e /\mathbb{R} {\bf 1}$, on $m$ as the space of equidistant trees on $[m]$.

\begin{definition}\label{def:clade}
Let $a, b \subset [m]$ such that $a \cup b = [m]$ and $a \cap b = \emptyset$. Suppose we have an equidistant phylogenetic tree $T$ with the leave set $[m]$.  A {\em clade} of $T$ with leaves $a \subset [m]$ is an equidistant tree on $a$ constructed from $T$ by adding all common ancestral interior nodes of any combinations of only leaves $a$ and excluding common ancestors including any leaf from $[m] - a$   in $T$, and all edges in $T$ connecting to these ancestral interior nodes and leaves  $a$.  
\end{definition}

\begin{definition}
For a rooted phylogenetic tree, a {\em nearest neighbor interchange (NNI)} is an operation of a  phylogenetic tree to change its tree topology by picking three mutually exclusive leaf sets $X_1, X_2, X_3 \subset X$ and changing a tree topology of the clade, possibly the whole tree, consisting with three distinct clades with leaf sets $X_1$, $X_2$, and $X_3$.  
\end{definition}

\begin{remark}
    Since there are three possible ways of connecting three distinct clades,  NNI move possibly creates two new tree topologies on $[m]$.
\end{remark}

\begin{definition}
    Suppose we have rooted phylogenetic trees $T_1, T_2$ on $[m]$.  The {\em Robinson-Foulds (RF) distance} is the number of operations that the subtree of $T_1$ has the same tree topology as the subtree of $T_2$ by removing a leaf of $T_1$ and the subtree of $T_2$ has the same tree topology as the subtree of $T_1$ by removing a leaf of $T_2$. 
\end{definition}

\begin{remark}
    The RF distance is always divisible by 2.
\end{remark}

\begin{remark}\label{ref1}
    One can see clearly that the RF distance between two trees which is one NNI move a part is 2 since they differ only one internal edge in each tree.
\end{remark}

\section{Method}\label{sec:method}

In this section we introduce our method to infer a missing part of an equidistant tree using tools from tropical geometry.  
Let $RF(T_1, T_2)$ be the RF distance between $T_1$ and $T_2$.  The algorithm on our method is shown in Algorithm \ref{alg1}.

\begin{algorithm}[!h]
\caption{Imputation with a tropical polytope}\label{alg1}
\KwData{Equidistant tree $T'$ on $a \subset [m]$, and a sample of equidistant trees on $[m]$, $\{T_1, \ldots, T_n\}$}
\KwResult{Estimated imputed tree $\hat{T}$ on $[m]$}
Let $T'_i$ be a tree dropped leaves $[m] - a$ from each $T_1$ for $i\in \{1, \ldots , n\}$\;
Drop leaves $[m] - a$ from each $T_1$ for $i\in \{1, \ldots , n\}$\;
Set $S = \emptyset$\;
Set $d \gets m^2$\;
\For{$i = 1, \ldots , n-1$}{
\If{$d > RF(T'_i, T'_j)$}
{
    $d\gets RF(T'_i, T')$\;
}
}
\For{$i = 1, \ldots , n-1$}{
\If{$RF(T'_i, T') == d$}
{
    $S \gets T'_i \cup S$\;
}
}
Compute the tropical polytope $P$ of ultrametrics computed from trees $T_i$ for all trees $T'_i \in S$\;
Let $T_{tmp}$ be a tree attached leaves $[m] - a$ to the root of $T'$\;
Convert an ultrametric $u \in \mathcal{U}_m$ computed from $T_{tmp}$\;
Let $v$ be a projection of $u$ onto $P$\;
Realize an equidistant tree $\hat{T}$ from $v$ and return $\hat{T}$\;
\end{algorithm}

\section{Theoretical Results}\label{sec:theory}

Let $a, b \subset [m]$ such that $a \cup b = [m]$ and $a \cap b = \emptyset$.  
Let $\{T_1, \ldots , T_n\}$ be a sample of equidistant trees with $m$ leaves.  Let $T'_i$ for $i = 1, \ldots n$ be an equidistant tree with $a$ by dropping tips $b$ from $T_i$, i.e., $T'_i$ is an induced tree on $a$.  

\begin{thm}\label{th:main}
    Suppose $\{T_1, \ldots , T_n\}$ is a sample of equidistant trees with $[m]$ and let $T_i = T'_i \cup T''_i$ such that $T'_i$ is a subtree on $a$ which is an equidistant tree with $a$ by dropping tips $b$ from $T_i$ and $T''_i$ is an subgraph graph with $b$ by adding all common ancestral interior nodes of any combinations of only leaves $a$ and excluding common ancestors including any leaf from $[m] - a$ in $T_i$ for $i = 1, \ldots , n$. Suppose $T'_i$ and $T'$ have the same tree topology for $i = 1, \ldots , n$.    If $T''_i$ are clade in $T_i$ for $i = 1, \ldots , n$ and $T''$ is also a clade in $T$, then an estimated tree $\hat{T}$ via our method with the tropical polytope $P := \tconv(T_1, \ldots , T_n)$ and $T$ differ at most the RF distance = 4.
\end{thm}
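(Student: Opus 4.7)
The plan is to use the clade assumption to decompose each ultrametric vector into three blocks (within \(a\), within \(b\), and the cross block from \(a\) to \(b\)) and then analyse the tropical projection block by block.  First, if \(T''_i\) is a clade of the binary equidistant tree \(T_i\), its complement is a clade on \(a\) meeting it at the root at a common height \(h_i\); by the tree-metric interpretation, every cross entry \(D_i(j,k)\) with \(j\in a,\,k\in b\) equals \(2h_i\).  The same holds for \(T\) and, because \(T_{tmp}\) attaches the leaves of \(b\) to the root of \(T'\), also for the input ultrametric \(u\).  Writing \(v=\pi_P(u)=\bigoplus_l \lambda_l\odot D_l\) with \(\lambda_l=\min(u-D_l)\), the cross entries of \(v\) all share the common value \(H:=\max_l(\lambda_l+2h_l)\), so \(\hat T\) has \(a\) and \(b\) as two clades at its root, agreeing with the root split of \(T\) and contributing \(0\) to the RF distance.

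Next I would handle the \(a\)-block.  The key observation is that, inside \(\mathcal U_{|a|}\), the cell of ultrametrics sharing a fixed tree topology is tropically convex: the topology is characterised by three-point conditions on each triple \(\{i,j,k\}\subset a\) of the form ``the two larger distances coincide and dominate the third'', and both the equality \(D(i,k)=D(j,k)\) and the inequality \(D(i,j)\le D(i,k)\) are preserved under coordinate-wise \(\max\) and scalar addition.  Since all \(T'_i\) share the topology of \(T'\), the restriction \(v^{aa}\) of \(v\) to the \(a\)-coordinates stays in that cell, so the induced subtree of \(\hat T\) on \(a\) matches the induced subtree of \(T\) on \(a\) exactly, again contributing \(0\) to the RF distance.

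The remaining work, and the main obstacle, is to bound the RF distance between the induced subtree of \(\hat T\) on \(b\), encoded by \(v^{bb}\), and the true clade \(T''\).  Since the hypotheses impose no topological relationship between the \(T''_i\) and \(T''\), this bound must come from rigidity of the tropical projection itself.  The plan is to argue that projecting the \(b\)-block of \(T_{tmp}\), which is a completely unresolved ``star'' at common height \(2h\), onto the tropical convex hull of the \(b\)-blocks of the \(T_i\) produces an ultrametric whose tree topology differs from \(T''\) by at most two nearest-neighbour interchanges — one possibly near the root of the \(b\)-clade, where the star input carries no resolving information, and at most one more arising from the ambiguity of which vertex attains the coordinate-wise maximum in the tropical combination.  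By Remark \ref{ref1} each NNI contributes \(2\) to the RF distance, yielding \(\mathrm{RF}(\hat T,T)\le 4\).  The first two steps are essentially bookkeeping once the clade hypothesis is in hand; the substantive content of the theorem lies in this topological-rigidity statement for the tropical projection on the \(b\)-coordinates.
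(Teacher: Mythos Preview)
Your approach is quite different from the paper's, and it has two genuine gaps.

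First, the cross-block claim is wrong. The hypothesis is only that $b$ forms a clade in each $T_i$; it does \emph{not} follow that the complement $a$ is also a clade, nor that the two meet at the root, nor that every cross entry $D_i(j,k)$ with $j\in a,\,k\in b$ equals a common value $2h_i$. Take the equidistant caterpillar $(((1,2),3),4)$ with $b=\{1,2\}$: then $b$ is a clade but $a=\{3,4\}$ is not, and $D(3,1)\ne D(4,1)$. Your reduction of the cross block to a single scalar, and hence the conclusion that $\hat T$ necessarily has $a$ and $b$ as root clades, does not go through. Second, the $b$-block step is not a proof but a hope. Nothing in the hypotheses relates the topology of $T''$ to any of the $T''_i$, and the $b$-coordinates of the input $u$ carry no information about $T''$ at all (they encode a star). ``Rigidity of the tropical projection'' cannot manufacture a bound of two NNI moves between the projected $b$-block and an unknown $T''$ out of no information; as written this step is an assertion, not an argument.

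The paper avoids both issues by working with whole trees rather than coordinate blocks. It argues that the clade hypothesis together with the common $T'$-topology force any two of $T_1,\ldots,T_n,T$ to be at most one NNI move apart, and then invokes Theorem~8 of \cite{YC} (a tropical line segment between two ultrametrics one NNI apart contains only trees of those two topologies) together with tropical convexity of $P$ to conclude that every ultrametric in $P$ realises a topology within one NNI of some $T_i$. Since $T$ is one NNI from each $T_i$, the estimate $\hat T\in P$ is at most two NNI moves from $T$, giving RF distance at most $4$. The external line-segment lemma is the key device; your block decomposition has no analogue of it, which is why the $b$-block step stalls.
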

\begin{proof}
    Since $T''_i$ are connected trees for $i = 1, \ldots , n$, $T''_i$ forms a clade in $T_i$ for $i = 1, \ldots , n$.  Also $T''$ is a connected tree, so that $T''$ is also a clade in $T$. This means that $T_i$ and $T_j$ for any $i, j \in \{1, \ldots , n\}$ have only one NNI move distance.  
    Since $T'_i$ and $T'$ have the same tree topology and since $T''$ is also a clade in $T$, $T_i$ and $T$ have only one NNI move distance.
    Note that $T_i$ and $T_j$ have at most the RF distance = 2 since $T_i$ and $T_j$ have just one NNI move difference, and so as with $T$.  Let $U_i$ is an ultrametric form a tree $T_i$ for $i = 1, \ldots , n$.  Then,
    take any tropical line segment $\Gamma_{u_i, u_j}$.  Since $T_i$ and $T_j$ have just one NNI move difference, by Theorem 8 in \cite{YC}, any tree topology of the tree realized from an ultrametric in $\Gamma_{u_i, u_j}$ has the same tree topology of $T_i$ or $T_j$.  Since $P$ is tropically convex, any point in $P$ is a tropically convex combination of $T_i$ for $i = 1, \ldots , n$.  Thus the tree topology of the tree realized by an ultrametric in $P$ has at most one NNI move different.  Since the estimate $\hat{T} \in P$, and any tree realized from an ultrametric in $P$ to $T$ has at most 2 NNI move difference.  Thus, we have the result.  
\end{proof}

\section{Computational Experiments}\label{sec:comp}
In this section, we apply our method to simulated data sets and compare its performance with the baseline model, which uses means of each missing element in an ultrametric computed from a tree, and multiple linear regression model.

\subsection{Simulated Data}
To assess a performance of our method, we use simulated datasets generated from the multi-species coalescent model using the software {\tt Mesquite} \cite{mesquite}.

Under the multi-species coalescent model, there are two parameters: species depth $SD$ and effective population size $N_e$.  In this paper we fix the effective population size as $N_e = 10,000$ and we vary $SD$ as we vary the ratio
\[
R = \frac{SD}{N_e}.
\]

\subsection{Experimental Design}

Here we vary $R = 0.25, 0.5, 1, 2, 5, 10$.  For this experiment, we fix the number of leaves as $10$.  Therefore $e = 45$.   For each value of $R$, we generate a random species tree via the Yule model first.  Then we generate the set of $1000$ gene trees from the multi-species coalescent model given the species tree.  Therefore, for each $R$, we have a simulated dataset with size $1000$.  

Note that when $R$ is larger we have tighter constraints to gene tree topologies by its species tree. Therefore, we do not have large variance for generating gene trees so that it is easier to estimate missing part of a gene tree.  On the other hand, if we have small $R$, then we have a large variance for gene tree topologies, the coalescent model is getting more like a random process \cite{coalescent}. 

For estimating the performance of our method when we vary the number of leaves missing, we set three different cases: one leaf out of 10 leaves is removed, two leaves out of 10 leaves are removed, and three leaves out of 10 leaves are removed.  For each scenario in terms of $R$ and in terms of the number of leaves removed, we pick random 200 observations from the data set of 1000 trees as a test set.  

To compare the performance of our method, we use the baseline model, i.e., we fill missing values of an ultrametric by taking the mean of  observations with full set of leaves and the multiple linear regression model. For the multiple linear regression model, we set a missing element as a response variable and observed elements in an ultrametric as predictors \cite{8554124}.  

\subsection{Results}

To assess a performance of our method against the baseline and linear regression model, we use the RF distance between an estimated tree $\hat{T}$ and $T$. The results are shown in Table \ref{tab:res1} and Figure \ref{fig:tropVSbase}.  Note that the smaller the RF distance between two trees, the closer their tree topologies are.  When the RF distance is 0, then their tree topologies are the same.  

\begin{table}[h]
\centering
\begin{tblr}{
  cell{1}{2} = {c},
  cell{1}{3} = {c=6}{c},
  cell{2}{2} = {c},
  cell{2}{4} = {c},
  cell{2}{5} = {c},
  cell{2}{6} = {c},
  cell{2}{7} = {c},
  cell{2}{8} = {c},
  cell{3}{1} = {r=3}{},
  cell{6}{1} = {r=3}{},
  cell{9}{1} = {r=3}{},
  vline{1-4} = {1}{},
  vline{3-9} = {2}{},
  vline{3,9} = {3-5,7-8,10-13}{},
  vline{1-3,9} = {6,7, 8, 9}{},
  vline{1-3,9} = {10, 11, 12, 13}{},
  vline{1-3,9} = {1, 2, 3, 4, 5}{},
  hline{1,12} = {-}{0.08em},
  hline{2-3,6,9} = {-}{},
  hline{4-5,7-8,10-11} = {2}{},
}
                 &                   & R  &   &   &   &     &      \\
 \# of leaves removed & Method            & 10 & 5 & 2 & 1 & 0.5 & 0.25 \\
 1 leaf removed & Tropical Metric   &0.76& 2.30 &4.11& 6.11 &7.91 &9.13  \\
                 & Baseline          &2.63 & 6.34&  8.71 &11.36 &12.43& 12.87 \\
                 & Linear Regression & 2.70 & 6.34 & 8.66& 11.26& 12.44 &12.89      \\
2 leaves removed & Tropical Metric   & 0.84& 2.65& 5.36& 6.95& 8.26& 9.20   \\
                 & Baseline          & 2.55 & 6.28 & 8.32 &11.02& 12.55 &13.00 \\
                 & Linear Regression & 2.66 & 6.29&  8.58 &10.96& 12.51 &13.06 \\
3 leaves removed & Tropical Metric   &1.32& 2.95& 6.06& 7.93 &9.85 &9.84 \\
                 & Baseline          & 2.46 & 6.12 & 7.95& 10.80 &12.73& 13.10 \\
                 & Linear Regression &2.56 & 6.16 & 8.05 &10.94 &12.70 &13.23     
\end{tblr}
\caption{These are average RF distances between estimated trees and true trees.  For each category, we infer $200$ trees from $800$ trees.  The smaller the RF distance is, we have better performance.}\label{tab:res1}
\end{table}


\begin{figure}
    \centering
    \includegraphics[width=0.45\textwidth]{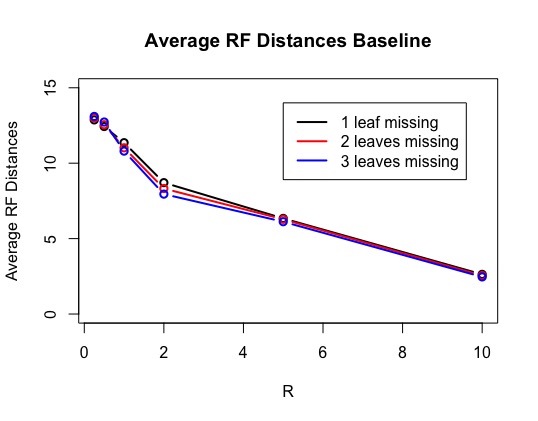}
    \includegraphics[width=0.45\textwidth]{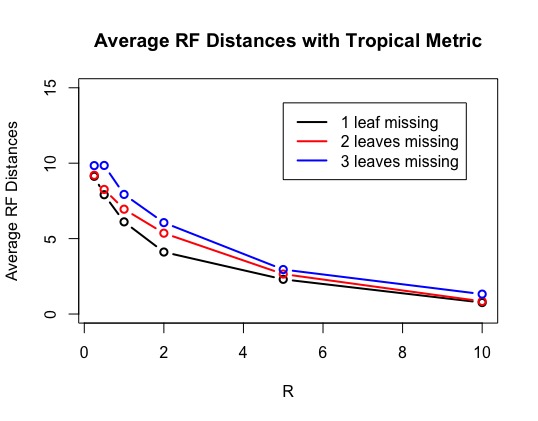}
    \caption{This figure shows performance on the baseline (Left) and our method using a tropical poltyope (Right).  For each category, we infer $200$ trees from $800$ trees.   The x-axis represents the ratio $R$ and the y-axis shows the average RF distances between estimated trees and true trees for $200$ trees. The smaller the RF distance is, we have better performance. }
    \label{fig:tropVSbase}
\end{figure}

\begin{figure}
    \centering
    \includegraphics[width=0.45\textwidth]{baseline.jpeg}
    \includegraphics[width=0.45\textwidth]{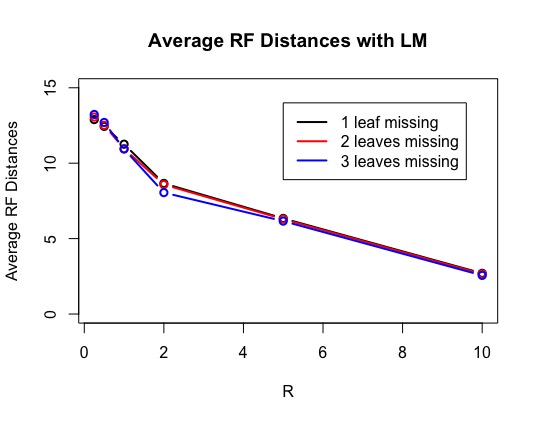}
    \caption{This figure shows performance on the baseline model (Left) and linear regression models (Right). For each category, we infer $200$ trees from $800$ trees.   The x-axis represents the ratio $R$ and the y-axis shows the average RF distances between estimated trees and true trees for $200$ trees. The smaller the RF distance is, we have better performance. As one can see, these results are very close to each other for all $R$.}
    \label{fig:LMVSbase}
\end{figure}

According to our computational experiments with simulated datasets shown in Table \ref{tab:res1} and Figure \ref{fig:tropVSbase}, our method has smaller RF distances in any cases compared to other methods.  It is interesting that the number of leaves removed seems very much affecting the results in general while clearly $R$ affects performances of all three methods we compare.  

If we have only one missing leaf and larger $R$, often the average RF distances between inferred trees and true trees is less than 1 because we have often the condition satisfied in Theorem \ref{th:main} due to very high constraints on tree topologies of gene trees.

\section{Discussion}

In this short paper, we show a novel method to impute a missing part of an equidistant tree on $[m]$ using a tropical polytope, which is an analogue of a linear regression in the setting of tropical geometry.  From simulated data generated from the multi-species coalescent model, we show that this method works very well. In addition we show a condition that the estimate tree and the true tree have at most 4 RF distance (Theorem \ref{th:main}).  

In future, we can investigate applying ``tropical probcipal component analysis (PCA)'' proposed by Yoshida, et al. in \cite{YZZ} to imputation of trees since the classical PCA can be viewed as a multivariate liear regression model with orthogonal projections.


%
\bibliographystyle{plain}
\bibliography{refs}
\end{document}